\documentclass[10pt]{article}
\usepackage{amsmath,amsthm,amsfonts,amssymb,latexsym,graphicx}

\usepackage[shortcuts]{extdash}

\usepackage{algorithm}
\usepackage[noend]{algpseudocode}
\algrenewcommand\algorithmicthen{\relax}
\algdef{C}[IF]{IF}{ElsIf}[1]{\textbf{elif}\ #1}
\algrenewcommand\algorithmicdo{\relax}

\usepackage[colorlinks=true,citecolor=blue,pdfpagemode=UseNone,pdfstartview=FitH]{hyperref}

\emergencystretch=5mm
\tolerance=400
\allowdisplaybreaks[4]

\renewcommand{\d}{\,\mathrm{d}}

\newcommand{\eP}{\lozenge}
\newcommand{\eN}{\square}

\newcommand{\E}{\mathbb{E}}
\renewcommand{\P}{\mathbb{P}}
\newcommand{\R}{\mathbb{R}}

\newcommand{\AAA}{\mathcal{A}}

\newcommand{\QQQ}{\mathcal{Q}}

\DeclareMathOperator{\cov}{cov}
\DeclareMathOperator{\var}{var}
\DeclareMathOperator{\rvar}{rvar}

\renewcommand{\complement}{\textsf{c}}
\newcommand{\DM}{\textnormal{DM}}

\theoremstyle{plain}
\newtheorem{theorem}{Theorem}[section]
\newtheorem{corollary}[theorem]{Corollary}
\newtheorem{lemma}[theorem]{Lemma}
\newtheorem{proposition}[theorem]{Proposition}

\theoremstyle{definition}

\theoremstyle{remark}
\newtheorem{remark}[theorem]{Remark}

\title{True and false discoveries with independent and sequential e-values}

\author{Vladimir Vovk\thanks%
  {Department of Computer Science,
  Royal Holloway, University of London,
  Egham, Surrey, UK.
  E-mail: \href{mailto:v.vovk@rhul.ac.uk}{v.vovk@rhul.ac.uk}.}
\and Ruodu Wang\thanks%
  {Department of Statistics and Actuarial Science,
  University of Waterloo,
  Waterloo, Ontario, Canada.
  E-mail: \href{mailto:wang@uwaterloo.ca}{wang@uwaterloo.ca}.}}

\begin{document}
\maketitle

\begin{abstract}
  In this paper we use e-values in the context of multiple hypothesis testing
  assuming that the base tests produce independent, or sequential, e-values.
  Our simulation and empirical studies and theoretical considerations suggest
  that, under this assumption, our new algorithms are superior
  to the known algorithms using independent p-values
  and to our recent algorithms designed for e-values
  without the assumption of independence.

   The version of this paper at \url{http://alrw.net/e} (Working Paper 4)
   is updated most often.
\end{abstract}

\section{Introduction}

The notion of a p-value has been widely criticized recently
as the basis of statistical hypothesis testing
(see, e.g., \cite{Wasserstein/etal:2019}).
Bayes factors provide an alternative approach
(see, e.g., \cite[Sect.~2]{Benjamin/Berger:2019}).
In this paper we continue the study of e-values, a promising new tool for testing, in addition to p-values and Bayes factors.
For recent work on hypothesis testing with e-values, as well as their connections and differences to methods based on p-values,
see, e.g., \cite{Vovk/Wang:2021,Shafer:2021,Wang/Ramdas:2022,Grunwald/etal:arXiv1906}.

Two important mathematical advantages of e-values over p-values
are that the arithmetic average of several e-values is always an e-value
and that the product of several independent e-values is an e-value.
The second property generalizes to the case of sequential e-values,
which is a far-reaching generalisation of independent e-values
to be discussed later in the paper.
Both properties are useful in multiple hypothesis testing.

Our previous papers \cite{Vovk/Wang:2021,Vovk/Wang:2023}
applying e-values to multiple hypothesis testing
did not make any assumptions about the base e-values
and relied on the arithmetic average of e-values being an e-value.
Using arithmetic averaging is very natural in this case
since the arithmetic mean essentially dominates any symmetric function
for merging e-values \cite[Proposition 3.1]{Vovk/Wang:2021}.

In this paper we assume that the base e-values are independent
or at least sequential.
This gives us extra freedom in combining e-values,
which we use for constructing confidence bounds for the number of true discoveries
in multiple hypothesis testing.
These confidence bounds are summarised in the form of ``discovery matrices'',
which were introduced in \cite{Vovk/Wang:2023}
in the case of arbitrary dependence between the base e-values.
In this paper we design similar procedures
for sequential e-values
and demonstrate the power of our procedures
in simulation and empirical studies.
The assumption that the base e-values are sequential (or independent) is,
of course,
a downside of the procedures developed in this paper;
however, under those assumptions,
the gain in power is very significant.

We start the main part of the paper in Sect.~\ref{sec:confidence}
from basic definitions related to e-values and confidence bounds.
Multiple hypothesis testing is usually understood as testing multiple hypotheses,
and this is our main object of interest in this paper.
However, Sect.~\ref{sec:single} deals with repeated testing of a single hypothesis
and introduces functions for merging e-values,
which is a first step towards testing multiple hypotheses.
In Sect.~\ref{sec:discovery} we introduce discovery matrices.
Section~\ref{sec:simulation} is devoted to a simulation study,
and Sect.~\ref{sec:empirical} to an empirical study of discovery matrices.
In Sect.~\ref{sec:efficient}
we develop a computationally more efficient version of our main procedure
for computing discovery matrices.
Statistical properties of this procedure are briefly discussed in Sect.~\ref{sec:attempt};
this is just a first attempt of theoretical analysis.
Section~\ref{sec:conclusion} concludes and lists some directions of further research.

\section{E-values for hypotheses testing and confidence estimation}
\label{sec:confidence}

To make our exposition self-contained, we start from basic definitions.
For further information on e-values,
see our previous papers \cite{Vovk/Wang:2021,Vovk/Wang:2023}.

In this paper we fix an underlying probability space $(\Omega,\AAA,\P)$
(which we rarely mention explicitly).
We will use the notation $\E(f)$ for the expectation $\int f \d\P$
of an extended random variable $f:\Omega\to[-\infty,\infty]$
w.r.\ to the true data-generating distribution $\P$.
More generally, we will write $\E_Q(f)$ for the expectation $\int f \d Q$ of $f$
w.r.\ to an arbitrary probability measure $Q$ on $(\Omega,\AAA)$.
The expectation $\E_Q(f)$ always exists (but may be equal to $\infty$)
when $f$ is nonnegative, that is $f:\Omega\to[0,\infty]$.

An \emph{e-variable} w.r.\ to a probability measure $Q$ on $(\Omega,\AAA)$
is a nonnegative extended random variable $E:\Omega\to[0,\infty]$
such that $\E_Q(E)\le1$.
A large value of $E(\omega)$ for the realized outcome $\omega$
is interpreted as casting doubt on $Q$ being the true data-generating distribution $\P$.
Indeed, by Markov's inequality,
$Q(E\ge c)\le1/c$ for any $c>1$,
and so $E$ can take large values only with a small $Q$-probability.
This interpretation assumes, of course,
that $E$ was chosen in advance of the statistical experiment.
An \emph{e-value} is the value $E(\omega)$ taken by the e-variable.

A more standard way of testing statistical hypotheses
is to use \emph{p-values},
defined to be the values taken by \emph{p-variables},
i.e., nonnegative random variables $P$ such that,
for any $\alpha\in(0,1)$, $Q(P\le\alpha)\le\alpha$.
There are numerous ways of converting p-values to e-values
and a natural way of converting e-values to p-values
(see, e.g., \cite[Sect.~2]{Vovk/Wang:2021}),
but in order to compare the strength of evidence against the null hypothesis
provided by e-values and p-values we will use Jeffreys's rules of thumb
that we will describe in Sect.~\ref{sec:simulation}.

Alongside the true data-generating probability measure $\P$
on the underlying measurable space $(\Omega,\AAA)$
we will consider other probability measures
and will use the notation $\QQQ$
for the set of all probability measures on $(\Omega,\AAA)$.
An \emph{e-test} is a family of e-variables $(E_Q\mid Q\in\QQQ)$,
$E_Q$ being an e-variable w.r.\ to $Q$.
Assuming the e-test $E$ had been chosen before the experiment,
the interpretation of $E_Q(\omega)$ is that
it measures lack of agreement between an outcome $\omega$
and a putative explanation $Q$.
Namely, suppose $E_Q(\omega)$ is large;
then the disagreement between $Q$ and $\omega$ has two sides:
\begin{itemize}
\item
  if $\omega$ happens,
  we do not regard $Q$ to be feasible as a data-generating distribution;
\item
  if we believe that $Q$ is the data-generating distribution ($Q=\P$),
  we do not expect $\omega$ to happen.
\end{itemize}
If $E_Q(\omega)\ge\alpha$,
we will say that $\omega$ is \emph{$\alpha$-strange} under $Q$.
Otherwise, $\omega$ and $Q$ \emph{agree} at level $\alpha$.

\subsection{Confidence regions}

We are often interested not in the data-generating distribution $\P$ itself
but in the value $g(\P)$ of a function $g$ on it;
for example $g$ can map the probability measures on the real line $\R$
to their medians.
In confidence estimation,
$g(\P)$ is usually interpreted as the value of a parameter
corresponding to $\P$;
we will refer to $g$ as a \emph{parameter function}.
For a given e-test $E$, parameter function $g:\QQQ\to\Theta$,
\emph{significance level} $\alpha>0$,
and realized outcome $\omega\in\Omega$,
the corresponding \emph{confidence region} is defined as
\begin{equation}\label{eq:region}
  \Gamma^g_{\alpha}(\omega)
  :=
  \{g(Q)\mid E_Q(\omega)<\alpha\}.
\end{equation}
A typical value of $\alpha$ used in this paper is $10$;
in this case $\Gamma^g_{\alpha}\subseteq\Theta$
($\Theta$ being the chosen \emph{parameter space})
is the set of all parameter values that agree with the realized outcome
at level $10$.

Notice that the confidence regions \eqref{eq:region}
are valid simultaneously for all functions $g:\QQQ\to\Theta$
(let us assume that the parameter space $\Theta$ is fixed,
although our statement of simultaneous validity
is also true for variable $\Theta$),
provided we are using the same e-test for all $g$.
Namely, $g(Q)\in\Gamma^g_{\alpha}(\omega)$
for all $g$ unless $\omega$ is $\alpha$-strange under $Q$.

\subsection{Necessity and possibility measures}

These are different ways of packaging the same information
as that available in confidence regions \eqref{eq:region}.
Given an e-test $E$,
we define the \emph{necessity measure}
\[
  \eN^g(B\mid\omega)
  :=
  \inf_{Q:g(Q)\notin B}
  E_Q(\omega)
\]
and the \emph{possibility measure}
\begin{equation}\label{eq:eP}
  \eP^g(B\mid\omega)
  :=
  \inf_{Q:g(Q)\in B}
  E_Q(\omega)
  =
  \eN^g(B^{\complement}\mid\omega),
\end{equation}
where $B\subseteq\Theta$ is a set of parameter values
and $B^{\complement}:=\Theta\setminus B$ is its complement.

The intuition behind $\eN^g(B\mid\omega)$
is given in terms of a Fisher-type disjunction:
$g(\P)\in B$ unless $\omega$ is $\eN^g(B\mid\omega)$-strange under $\P$
(cf.\ \cite[Sect.~III.1]{Fisher:1973}).
For example, we expect $g(\P)\in B$ if $\eN^g(B\mid\omega)$ is large.
This also gives an interpretation of $\eP^g(B\mid\omega)$
in view of the duality
\[
  \eP^g(B\mid\omega)
  =
  \eN^g(B^{\complement}\mid\omega).
\]
Namely, $g(\P)\in B$ is impossible
unless $\omega$ is $\eP^g(B\mid\omega)$-strange.
(So we could also call \eqref{eq:eP} an impossibility measure.)
For example, we expect $g(\P)\notin B$ if $\eP^g(B\mid\omega)$ is large.

\section{Multiple testing of a single null hypothesis}
\label{sec:single}

A measurable function $F:[0,\infty)^K\to[0,\infty)$
for an integer $K\ge1$ is an \emph{ie-merging function}
if, for any probability space and any independent e-variables $E_1,\dots,E_K$ on it,
the extended random variable $F(E_1,\dots,E_K)$ is an e-variable
(when it exists, i.e., when $E_1,\dots,E_K$ take finite values).

An important relaxation of independence
of e-variables $E_1,\dots,E_K$ is the requirement that they be \emph{sequential}:
for each $k\in\{1,\dots,K\}$,
\[
  \E(E_k\mid E_1,\dots,E_{k-1})\le1.
\]
Sequential e-variables typically describe a sequential process of testing
the null hypothesis $\P$:
$E_k$ should be a valid e-variable even when $E_1,\dots,E_{k-1}$ are known.
A measurable function $F:[0,\infty)^K\to[0,\infty)$ is an \emph{se-merging function}
if, for any probability space and any sequential e-variables $E_1,\dots,E_K$ on it,
$F(E_1,\dots,E_K)$ is an e-variable.
Of course, all se-merging functions are ie-merging functions,
but the converse is not true \cite[Example~2]{Vovk/Wang:2024}.

We extend these merging functions to be of the form $F:[0,\infty]^K\to[0,\infty]$
in a canonical way (described in \cite{Vovk/Wang:2021}, Proposition C.1 in the Online Supplement):
if any of the arguments of $F$ is $\infty$, set $F$ to~$\infty$.

Important examples of se-merging functions \cite{Vovk/Wang:2021} are
\begin{equation}\label{eq:U}
  U_n(e_1,\dots,e_K)
  :=
  \frac{1}{\binom{K}{n}}
  \sum_{\{k_1,\dots,k_n\}\subseteq\{1,\dots,K\}}
  e_{k_1} \dots e_{k_n},
  \quad
  n\in\{1,\dots,K\}.
\end{equation}
We will refer to them as the \emph{U-statistics}
(they are the standard U-statistics with product as kernel).
The statistics $U_1$ play a special role
since they belong to the narrower class of \emph{e-merging functions},
meaning that $U_1(E_1,\dots,E_k)$ is an e-variable
whenever $E_1,\dots,E_K$ are e-variables
(not necessarily sequential).
Multiple hypothesis testing using $U_1$ was explored in \cite{Vovk/Wang:2021,Vovk/Wang:2023},
and in this paper we will often be interested in $U_2$.

An assumption that we will need to make about merging functions in this paper
is that they are increasing in each argument
and are symmetric (more fully, permutation-symmetric, i.e., not depending on the order of their arguments).
In particular, Algorithm~\ref{alg:DM} below
will assume that its underlying merging function $F$
is increasing and symmetric.
While this assumption appears to be very natural for ie-merging functions,
it becomes more restrictive for se-merging functions:
see Example~1 in~\cite{Vovk/Wang:2024}.

\section{Discovery matrices for independent and sequential e-values}
\label{sec:discovery}

In this section we consider the problem of testing $K$ statistical hypotheses
$H_k$, $k=1,\dots,K$, for some $K\in\{2,3,\dots\}$.
Each hypothesis $H_k$ may be \emph{simple}, $H_k\in\QQQ$,
or \emph{composite}, $H_k\subseteq\QQQ$.
We assume, without loss of generality,
that $H_k$ is composite
(a simple hypotheses $Q\in\QQQ$ can be interpreted as the composite hypothesis $\{Q\}$).

For each $k$ we are testing the null hypothesis $H_k$ using an e-variable $E_k$.
Namely, we are given a sequence $E_1,\dots,E_K$ of extended random variables
such that each $E_k$, $k=1,\dots,K$,
is an e-variable for testing $H_k$
in the sense of satisfying $\E_Q(E_k) \le 1$ for all $Q\in H_k$.
Therefore, $E_k$ is an e-variable w.r.\ to any $Q\in H_k$,
but it does not have to be an e-variable w.r.\ to the true data-generating measure $\P$.

\subsection{Discovery vectors}

We are interested in the set of $k\in\{1,\dots,K\}$ for which $\P\in H_k$,
i.e., $H_k$ is a true hypothesis (which  also determines the set of indices for which the hypothesis is false).
Our first goal is to define, for each set $R\subseteq\{1,\dots,K\}$,
a confidence region for the number of $k\in R$
for which $H_k$ is false.
Our confidence region will always be of the form $\{L,\dots,K\}$
for some lower bound $L$
(we only have a lower confidence bound
since $\E(E_k)\le1$ may be true even if $H_k$ is not a true hypothesis).

Suppose that $H_1,\dots,H_K$ are scientific hypotheses
that an experimental scientist is interested in.
After observing the e-values $e_1,\dots,e_K$,
where $e_k$ is the realized value of the e-variable $E_k$,
the scientist comes up with a \emph{rejection set} $R\subseteq\{1,\dots,K\}$
containing the indices of the hypotheses that she decides to reject
based on $e_1,\dots,e_K$.
The elements of $R$ are referred to as \emph{discoveries}
(i.e., they are discoveries as claimed by the scientist).
A discovery $k\in R$ is a \emph{true discovery} if $\P\notin H_k$,
and it is a \emph{false discovery} if $\P\in H_k$.
The motivation for this terminology is that $H_k$ is regarded
to be the default hypothesis saying that an interesting effect does not exist
\cite{Kotz/etal:2006};
rejecting the null hypothesis is then a discovery.

A natural way to choose the rejection set $R$
is to include in it a number of $k$ with the largest $e_k$,
but the scientist sometimes might want to include all $k$
connected by some common theme \cite[Sect.~4.1]{Goeman/Solari:2011}.
We will start from discussing arbitrary $R$
and then will move on to the $R$ corresponding to the largest e-values.

For each $Q\in\QQQ$, let
\[
  I_Q
  :=
  \left\{
    k\in\{1,\dots,K\}
    \mid
    Q\in H_k
  \right\}
\]
be the indices of the true hypotheses under $Q$.
Given a rejection set $R$, we are interested in the parameter
\begin{equation*}
  g_R(Q)
  :=
  \left|
    R \setminus I_Q
  \right|,
\end{equation*}
which is the number of true discoveries in $R$
under $Q$.
While in this paper we concentrate on the parameter function $g_R$,
this function can be generalized in various directions;
see, e.g., \cite[Remark~6.1]{Vovk/Wang:2023}.

We are interested in lower confidence bounds
on the number of true discoveries.
For that,
in order to apply the recipe \eqref{eq:region},
we need an e-test.
A natural way to obtain an e-test is to apply
a sequence of symmetric e-merging functions $F_{K'}:[0,\infty)^{K'}\to[0,\infty)$,
$K'\in\{2,\dots,K\}$,
at each $Q\in\QQQ$:
\begin{equation}\label{eq:test}
  E_Q
  :=
  F_{\left|I_Q\right|}((E_k)_{k\in I_Q})
  =
  F((E_k)_{k\in I_Q}),
\end{equation}
where $(E_k)_{k\in I_Q}$ stands for the sequence of length $I_Q$
whose elements are $E_k$ with $k\in I_Q$
in the ascending order of $k$
(although the order of the elements of this sequence does not matter
since $F_{K'}$ are assumed symmetric).
We define $F_0$ and $F_1$ in~\eqref{eq:test}
(needed when $\left|I_Q\right|\in\{0,1\}$)
separately;
namely, $F_0:=1$ and $F_1(e):=e$.
The second ``$=$'' in \eqref{eq:test} introduces our convention of omitting
the lower index $K'$ in the notation $F_{K'}$;
indeed, the lower index is redundant as it can be recovered
from the number of arguments of $F_{K'}=F_{K'}(e_1,\dots,e_{K'})$.

In order to obtain tighter confidence regions,
we assume that, for any probability measure $Q\in\QQQ$,
the random e-variables $E_k$, $k\in I_Q$, are independent under $Q$.
It is clear that in this case \eqref{eq:test} is an e-test
provided $F$ is a symmetric ie-merging function.
This will be the main case considered in this paper.

We gave examples of useful ie-merging functions in Sect.~\ref{sec:single}.
However, for the use in~\eqref{eq:test}
we need to generalize \eqref{eq:U} to the case $n>K\ge1$
(remember that we have $K':=\left|I_Q\right|$ in place of $K$
when using \eqref{eq:U} in the context of \eqref{eq:test},
and so $K'$ can be a small number even for a large number of the null hypotheses).
Let us set
\begin{equation}\label{eq:n-K}
  U_n(e_1,\dots,e_K)
  :=
  U_K(e_1,\dots,e_K)
  =
  e_1\dots e_K,
  \quad
  n>K.
\end{equation}
This convention is not needed in the case $n=2$,
which is our main object of study,
since it was already made implicitly when defining $F_1$.

For each $j\in\{0,\dots,K-1\}$,
we are interested in the possibility measure
\[
  \eP^{g_R}(\{0,\dots,j\}\mid\omega),
\]
where $\omega\in\Omega$ is the realized outcome,
which we usually omit, as is customary in probability theory.
The interpretation of $\eP^{g_R}(\{0,\dots,j\})$
is that the number of true discoveries exceeds $j$
unless the realized outcome is $\eP^{g_R}(\{0,\dots,j\})$-strange.

A more explicit representation of $\eP^{g_R}(\{0,\dots,j\})$ is:
\begin{equation}\label{eq:explicit}
  \begin{aligned}
    \eP^{g_R}(\{0,\dots,j\})
    &=
    \inf_{Q\in\QQQ:\left|R \setminus I_Q\right|\le j}
    E_Q
    =
    \min_{Q\in\QQQ:\left|R \setminus I_Q\right|\le j}
    F
    ((E_k)_{k\in I_Q})\\
    &\ge
    \min_{I\subseteq\{1,\dots,K\}:\left|R \setminus I\right|\le j}
    F
    ((E_i)_{i\in I})
    =:
    D^R_F(j).
  \end{aligned}
\end{equation}
The inequality in \eqref{eq:explicit} may be strict
(when some subsets of $\{1,\dots,K\}$
cannot be represented in the form $I_Q$ for any $Q\in\QQQ$),
but in interesting cases we have an equality there.
The symbol ``$=:$'' means that $D^R_F(j)$ is being defined.

The vector $(D^R_F(j))_{j=0}^{K-1}$ is the \emph{discovery vector} for $R$
(in \cite{Vovk/Wang:2021} we referred to it as regularized discovery e-vector).
We will now introduce ``discovery matrices'',
whose rows are discovery vectors.

\subsection{Discovery matrices}

In this subsection we impose restrictions
on the ie-merging function $F$ in \eqref{eq:explicit}
which will allow us to restrict our attention
to a relatively small number of rejection sets.
Namely, as mentioned at the end of Sect.~\ref{sec:single},
we assume that $F$ is increasing in each of its arguments and symmetric.
Suppose the e-values are listed in the decreasing order,
$e_1\ge\dots\ge e_K$.
Then the rejection sets
\[
  R_r
  :=
  \{1,\dots,r\},
  \quad
  r\in\{1,\dots,K\},
\]
form a complete family of rejection sets:
for any $r$ and any rejection set $R$ of size $r$
we have, for all $j\in\{0,\dots,r-1\}$,
$D^{R_r}_F(j)\ge D^{R}_F(j)$.

\begin{algorithm}[bt]
  \caption{Discovery matrix (lower triangular)}
  \label{alg:DM}
  \begin{algorithmic}[1]
    \Require
      ie-merging (or se-merging) functions $F_k$, $k\in\{1,\dots,K\}$.
    \Require
      a decreasing sequence of e-values $e_1\ge\dots\ge e_K$.
    \For{$r=1,\dots,K$}
      \For{$j=0,\dots,r-1$}\label{l:middle}
        \State $S_{r,j}:=\{j+1,\dots,r\}$
        \State $\DM^F_{r,j}:=F((e_i)_{i\in S_{r,j}})$
        \For{$k=r+1,\dots,K$}\label{l:inner}
          \State $e := F((e_i)_{i\in S_{r,j}\cup\{k,\dots,K\}})$
          \If{$e < \DM^F_{r,j}$}
	    \State $\DM^F_{r,j} := e$\label{l:e}
          \EndIf
        \EndFor
      \EndFor
    \EndFor
  \end{algorithmic}
\end{algorithm}

The \emph{discovery matrix} is a lower triangular matrix with the entries
\[
  \DM_{r,j}
  =
  \DM^F_{r,j}
  :=
  D^{R_r}_F(j),
  \quad
  r=1,\dots,K,
  \enspace
  j=0,\dots,r-1;
\]
we often drop the upper index $F$ in $\DM^F_{r,j}$ when it is clear from the context.
Algorithm~\ref{alg:DM} is one way of constructing a discovery matrix,
under an additional assumption
(cf.\ \eqref{eq:condition} below),
based on a family of ie-merging functions $F_k$, $k\in\{2,\dots,K\}$.
The algorithm implements the equality between the extreme terms of
\begin{equation}\label{eq:algorithm}
  \DM^F_{r,j}
  =
  \min_{I\subseteq\{1,\dots,K\}:\left|R^r\setminus I\right|\le j}
  F((E_i)_{i\in I})
  =
  \min_{I\subseteq\{1,\dots,K\}:\left|R^r\setminus I\right|=j}
  F((E_i)_{i\in I}).
\end{equation}
The first equality in~\eqref{eq:algorithm}
follows from the last equality (``$=:$'') in \eqref{eq:explicit},
and the second equality in~\eqref{eq:algorithm} follows
from the following natural condition of consistency
between different merging functions in the family $(F_k)$:
if $\mathbf{e}\in[0,\infty)^k$ for $k\in\{1,\dots,K-1\}$ and $e\in[0,\infty)$,
then
\begin{equation}\label{eq:condition}
  e \ge \max(\mathbf{e})
  \Longrightarrow
  F(\mathbf{e},e)
  \ge
  F(\mathbf{e}).
\end{equation}
This is the condition under which Algorithm~\ref{alg:DM} is valid.

\begin{remark}
  Algorithm~\ref{alg:DM} is a version of Algorithm~2 in \cite{Vovk/Wang:2023}.
  In both algorithms the e-values are assumed to be ordered
  (without loss of generality, since we assume $F$ to be symmetric),
  but while in Algorithm~2 in \cite{Vovk/Wang:2023} the order is ascending,
  in Algorithm~\ref{alg:DM} it is descending.
  Another difference is that in this paper
  (as in the arXiv version of \cite{Vovk/Wang:2023})
  we use the version of the discovery matrix described
  in \cite[Remark~6.3]{Vovk/Wang:2023}.
  Finally, in \cite{Vovk/Wang:2023} we were mainly interested
  in the arithmetic-mean merging function
  (because it essentially dominates all symmetric e-merging functions
  with no dependence assumption),
  while in this paper we are interested in a wider range of merging functions
  by considering independent or sequential e-variables.
\end{remark}

The validity of Algorithm~\ref{alg:DM} follows
from the fact that the min in
\[
  \min_{I\subseteq\{1,\dots,K\}:\left|R^r\setminus I\right|=j}
  F((E_i)_{i\in I})
\]
(cf.\ \eqref{eq:algorithm}) is attained at $I$
of the form
\[
  S_{r,j}\cup\{k,\dots,K\}
  =
  \{j+1,\dots,r\}\cup\{k,\dots,K\}.
\]
This follows immediately from the ie-merging function $F$
being increasing in all arguments and symmetric.

Discovery matrices satisfy useful properties of monotonicity
given in the following proposition
(which is proved as part of the proof of Proposition~6.3
in the arXiv version of \cite{Vovk/Wang:2023}).
The following proposition assumes the standard way of presenting matrices
(as in Figures~\ref{fig:U1_vs_U2}--\ref{fig:GWGS} below).
\begin{proposition}
  Suppose the family $(F_k)$ of ie-merging functions satisfies \eqref{eq:condition}.
  Then $\DM_{r,j}$ is
  \begin{itemize}
  \item
    decreasing along the rows:
    $\DM_{r,j'}\le\DM_{r,j}$ when $j'>j$;
  \item
    increasing down the columns:
    $\DM_{r',j}\ge\DM_{r,j}$ when $r'>r$;
  \item
    decreasing in the Southeast direction:
    $\DM_{r+i,j+i}\le\DM_{r,j}$ when $i>0$.
  \end{itemize}
  (Even if \eqref{eq:condition} is violated,
  these properties are satisfied for the regularized version~\eqref{eq:regular}
  defined below.)
\end{proposition}

Condition~\eqref{eq:condition},
despite looking very natural,
is not satisfied, strictly speaking,
even for the $U_2$ ie-merging function.
For example, taking $\mathbf{e}=(e_1)$ of length 1, we have
\[
  U_2(e_1,e)
  =
  e_1 e
  <
  e_1
  =
  U_2(e_1)
  =
  U_1(e_1)
\]
(remember our convention \eqref{eq:n-K})
for some $e_1$ and $e$ such that $e>e_1$:
it suffices to take $e<1$.
In this case, however, both $e_1$ and $e_1 e$
are useless as e-values,
and in practice we can regard~\eqref{eq:condition} as satisfied for $U_2$
(it is natural to ignore entries below 1 in discovery matrices,
or to replace them by 1).

If \eqref{eq:condition} is violated,
we should regularize the output of Algorithm~\ref{alg:DM} by redefining
\begin{equation}\label{eq:regular}
  \DM^F_{r,j}
  :=
  \min
  \left(
    \DM^F_{r,0},\dots,\DM^F_{r,j}
  \right).
\end{equation}
Then the equality between the extreme terms of \eqref{eq:algorithm}
will always hold,
and so $\DM_{r,j}$ will be a lower bound on $\eP^{g_R}(\{0,\dots,j\})$.

The computation time of Algorithm~\ref{alg:DM}
depends on the computation time of the family of ie-merging function $F_k$, $k=1,\dots,K$.
In the cases of primary interest to us, namely $U_n$ for a fixed $n$,
the computation time for each of $F_k$ is $O(K)$ (and even $O(k)$).
For $U_1$ this is obvious,
and for $U_2$ it suffices to represent it in the form
\[
  U_2(e_1,\dots,e_k)
  =
  \frac{1}{K(K-1)}
  \left(
    (e_1+\dots+e_k)^2
    -
    (e_1^2+\dots+e_k^2)
  \right).
\]
For a general fixed $n$, use induction in $n$.

Assuming that $F_k$, $k=1,\dots,K$, are computed in linear time, $O(K)$,
each row of the discovery matrix can be computed by Algorithm~\ref{alg:DM}
in time $O(K^3)$:
computing each e-value (line \ref{l:e}) takes time $O(K)$,
each inner loop (line \ref{l:inner}) takes time $O(K^2)$,
and so each middle loop (line \ref{l:middle}) takes time $O(K^3)$.
Therefore, computing the whole discovery matrix takes time $O(K^4)$.
Applying regularization~\eqref{eq:regular} on top of these calculations,
if needed, only takes time $O(K)$,
and so does not change the overall time
($O(K^3)$ in the case of a row of the discovery matrix
or $O(K^4)$ in the case of the whole discovery matrix).

\subsection{Beyond independent e-values}

So far we have discussed the case of independent e-variables $E_k$
(more precisely, we have assumed the $Q$-independence of $(E_k\mid Q\in H_k)$
under any $Q\in\QQQ$).
Replacing ie-merging functions by se-merging functions,
we can assume, instead, that $E_1,\dots,E_K$ are sequential,
in the sense of satisfying $\E_Q(E_k\mid E_1,\dots,E_{k-1}) \le 1$ whenever $Q\in H_k$.
To check that \eqref{eq:test} is indeed an e-test,
apply the tower property of conditional expectations:
the latter implies that $(E_k)_{k\in I_Q}$ are sequential e-variables under $Q$
whenever $E_1,\dots,E_K$ are.

In Sections~\ref{sec:simulation} and~\ref{sec:empirical}
we will apply Algorithm~\ref{alg:DM} to $K$ independent e-values.
A typical scenario in which independent e-values arise
in $K$ statistical experiments
is where there are reasons to believe that there is no or little interference
or sample sharing between different experiments.
The experiments can be conducted at the same time or at different times.

\begin{remark}
  It is interesting that,
  in the case of $U_2$ considered in the following two sections,
  the assumption of independence
  between $E_1,\dots,E_K$ can be relaxed
  in a way different from this paper's assumption
  that $E_1,\dots,E_K$ are sequential.
  Instead, we can assume that the covariances $\cov(E_i,E_j)$, $i\ne j$, are all nonpositive.
  Indeed, in this case
  \begin{multline*}
    \E(E_i E_j)
    =
    \E((\E E_i+(E_i-\E E_i))(\E E_j+(E_j-\E E_j)) \\
    =
    (\E E_i) (\E E_j)
    +
    \cov(E_i,E_j)
    \le
    1.
  \end{multline*}
  Chi et al.\ \cite[Sect.~2]{Chi/etal:arXiv2212} describe interesting practical situations
  where this assumption is realistic.
  This includes sampling without replacement.
  Suppose there is a very large fixed population,
  and $K$ scientists each take a different sample from the population without replacement.
  Then any increasing statistics (e.g., sample means) of these $K$ subsamples are negatively associated,
  and their correlation is nonpositive.
\end{remark}

A possible scenario in which sequential (but not independent) e-variables $E_1,\dots,E_K$ arise
is where the experiments are performed sequentially,
the experiment $E_k$ for $k\in\{2,\dots,K\}$ is started after the experiment $E_{k-1}$ is finished,
and the experiment $E_k$ is designed using the results
of the previous experiments $E_1,\dots,E_{k-1}$.
It is essential that $E_k$ should be an e-variable
conditionally on the previous experiments.
The computational experiments reported in this paper do not cover this case,
but it is important for us mathematically
as all our methods are applicable to sequential e-variables.

\section{A simulation study with independent e-values}
\label{sec:simulation}

In this section we run Algorithm~\ref{alg:DM} applied to $U_2$ and, for comparison, $U_1$.
As discussed in the previous section,
the $U_n$ discovery matrix can be computed in time $O(K^4)$.
For $n=1$, the time can be improved from $O(K^4)$ to $O(K^2)$
\cite[Sect.~9]{Vovk/Wang:2023}.
For $n=2$, we can improve the time $O(K^4)$ to $O(K^3)$,
as we show in Sect.~\ref{sec:efficient} (Algorithm~\ref{alg:DM-eff}),
and this is sufficient to cope not only with the case $K=200$
that we use in our simulation studies in this section
but also with $K$ of a few thousand
(as used in our empirical studies in the next section,
where we compute only part of the discovery matrix).

\begin{figure}
  \begin{center}
    \includegraphics[width=0.49\textwidth]{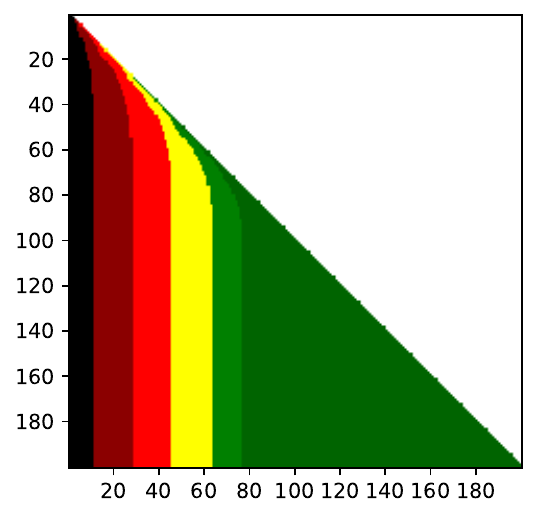}
    \includegraphics[width=0.49\textwidth]{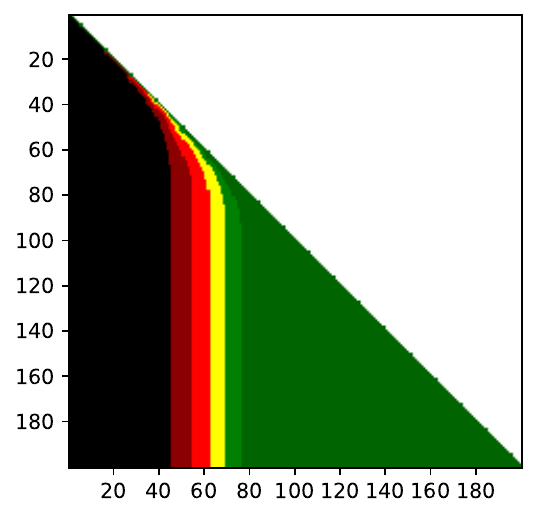}
  \end{center}
  \caption{Left panel:
    the discovery matrix for the $U_1$ statistic (i.e., arithmetic mean)
    for 100 false and 100 true null hypotheses.
    Right panel: the $U_2$ analogue.}
  \label{fig:U1_vs_U2}
\end{figure}

We generate the base e-values as in \cite[Sect.~7]{Vovk/Wang:2023}:
the null hypotheses are $\mathcal{N}(0,1)$, $K=200$,
the first $100$ observations $x$ are generated from $\mathcal{N}(-3,1)$,
the last $100$ from $\mathcal{N}(0,1)$, all independently,
and the base e-variables are the likelihood ratios
\begin{equation}\label{eq:E}
  E(x)
  :=
  \frac{\exp(-(x+3)^2/2)}{\exp(-x^2/2)}
  =
  \exp(-3x - 9/2).
\end{equation}

The results are shown in Figure~\ref{fig:U1_vs_U2}
(whose left panel is identical to the upper left panel of Figure~4 in \cite{Vovk/Wang:2023});
they are much better for $U_2$.
Our chosen colour scheme, to be described momentarily,
entails that ``better'' means ``darker'' here.

Each panel shows the lower triangular matrix $\DM^F_{r,j}$,
the left for $F=U_1$ and the right for $F=U_2$.
The colour scheme used in this figure is inspired by Jeffreys's \cite[Appendix~B]{Jeffreys:1961}
(as in \cite{Vovk/Wang:2023}):
\begin{itemize}
\item
  The entries with $\DM^F_{r,j}$ below 1 are shown in dark green;
  there is no evidence that there are more than $j$ true discoveries
  among the $r$ hypotheses with the largest e-values.
\item
  The entries $\DM^F_{r,j}\in(1,\sqrt{10})\approx(1,3.16)$ are shown in light green.
  For them the evidence is poor.
\item
  The entries $\DM_{r,j}\in(\sqrt{10},10)\approx(3.16,10)$ are shown in yellow.
  The evidence is substantial.
\item
  The entries $\DM^F_{r,j}\in(10,10^{3/2})\approx(10,31.6)$ are shown in light red.
  The evidence is strong.
\item
  The entries $\DM^F_{r,j}\in(10^{3/2},100)\approx(31.6,100)$ are shown in dark red.
  The evidence is very strong.
\item
  Finally, the entries $\DM^F_{r,j}>100$ are shown in black,
  and for them the evidence is decisive.
\end{itemize}

The interpretation of the two discovery matrices in Figure~\ref{fig:U1_vs_U2}
in terms of confidence regions
is particularly convenient:
for each row $r$ of a discovery matrix,
\begin{itemize}
\item
  the red (both dark and light), yellow, and green (both dark and light) entries in row $r$
  coincide with the confidence region at significance level $100$
  for the number of true discoveries among the $r$ largest e-values,
\item
  the light red, yellow, and green entries in row $r$
  coincide with the confidence region at significance level $10^{3/2}$
  for the number of true discoveries among the $r$ largest e-values,
\item
  the yellow and green entries in row $r$
  coincide with the confidence region at significance level $10$
  for the number of true discoveries among the $r$ largest e-values,
\item
  the green entries in row $r$
  coincide with the confidence region at significance level $\sqrt{10}$
  for the number of true discoveries among the $r$ largest e-values
  (this is the final interesting case).
\end{itemize}
A darker discovery matrix
(such as the right-hand panel of Figure~\ref{fig:U1_vs_U2}
as compared with its left-hand panel)
is preferable since it means tighter confidence regions.

\subsection*{Comparison with methods based on p-values}

It is interesting to compare the right panel of Figure~\ref{fig:U1_vs_U2}
to a similar figure for p-values obtained by standard methods.
A rigorous way of doing it
would be to convert, or as we usually say, \emph{calibrate},
e-values to p-values and vice versa.
As we mentioned in Sect.~\ref{sec:confidence},
there are numerous ways of calibrating p-values to e-values.
However, the only admissible way of calibrating an e-value $e$
to a p-value $p$ is $p:=\min(1/e,1)$.
Its validity follows from Markov's inequality
($\E(1/E\le\alpha)\le\alpha$ for any e-variable $E$),
and its domination of any other e-to-p calibrator
is stated in \cite[Proposition~2.2]{Vovk/Wang:2021}.
Comparing e-values and p-values is discussed in detail
in \cite[Sect.~3]{Vovk/Wang:2023},
where we demonstrate very low ``round-trip efficiency''
of converting e-values to p-values and back:
namely, we start from a highly statistically significant p-value of $0.5\%$,
transform it to an e-value using a popular method
\cite[(1)]{Benjamin/Berger:2019},
and then transform it back to a p-value by inverting the e-value;
the resulting p-value (7.2\%) is not even statistically significant.
Therefore, in \cite{Vovk/Wang:2021} we emphasize informal comparisons.

\begin{figure}
  \begin{center}
    \includegraphics[width=0.49\textwidth]{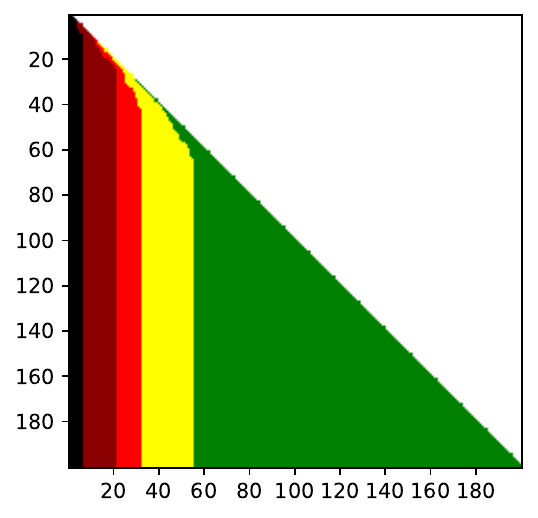}
    \includegraphics[width=0.49\textwidth]{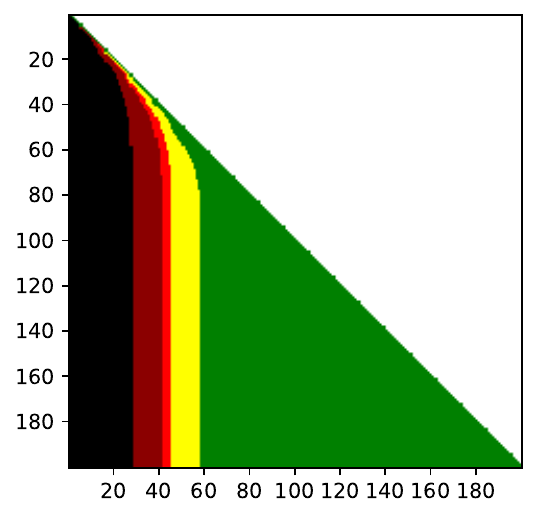}
  \end{center}
  \caption{Left panel:
    the discovery p-matrix for the GWGS procedure
    with independent p-values.
    Right panel: the $U_2$ discovery matrix e-to-p calibrated
    via $p:=\min(1/e,1)$.}
  \label{fig:ie_vs_p}
\end{figure}

The left panel of Figure~\ref{fig:ie_vs_p} shows the p-values produced
by the GWGS procedure, which is designed specifically for p-values.
The procedure is very standard
\cite{Genovese/Wasserman:2004,Goeman/Solari:2011,Goeman/etal:2019Biometrika},
but the abbreviation,
after the four authors of this procedure,
that we use for it was introduced in our previous paper \cite{Vovk/Wang:2023}.
As presented in that paper, the procedure produces a discovery matrix based on p-values,
or as we will say, \emph{discovery p-matrix};
namely, the discovery p-matrix in Figure~\ref{fig:ie_vs_p} has the entries
\begin{equation}\label{eq:DM-p}
  \DM^{\text{Simes}}_{r,j}
  :=
  \max_{I\subseteq\{1,\dots,K\}:\left|R^r\setminus I\right|\le j}
  F((P_i)_{i\in I}),
  \quad
  r=1,\dots,K,
  \enspace
  j=0,\dots,r-1
\end{equation}
(cf.\ \eqref{eq:algorithm}),
where $F$ is Simes's \cite{Simes:1986} function for merging independent p-values.
The underlying p-value $P_k$ for testing the null hypothesis $H_k$
for each of the 200 observations is the optimal one coming from the Neyman--Pearson lemma
(and so it is based on \eqref{eq:E} as test statistic).

Figure~\ref{fig:ie_vs_p} uses what we called Fisher's scale in \cite{Vovk/Wang:2023},
but now we extend it by two further thresholds,
one of which is $0.5\%$, as advocated by Benjamin et al.\ \cite{Benjamin/etal:2018}.
Therefore, our colour scheme is:
\begin{itemize}
\item
  The p-values above $5\%$ are shown in green;
  they are not statistically significant.
\item
  The p-values between $1\%$ and $5\%$ are shown in yellow;
  they are statistically significant but not highly significant.
\item
  The p-values between $0.5\%$ and $1\%$ are shown in red;
  they are highly significant
  (but fail to attain the more stringent criterion of significance
  advocated in \cite{Benjamin/etal:2018}).
\item
  The p-values between $0.1\%$ and $0.5\%$ are shown in dark red.
\item
  The p-values below $0.1\%$ are shown in black;
  they can be regarded as providing decisive evidence against the null hypothesis
  (to use Jeffreys's expression in a slightly different context).
\end{itemize}

Jeffreys \cite[Appendix~B]{Jeffreys:1961} gives a crude but convenient rule of thumb
for comparing Bayes factors and p-values.
E-values are closely related to Bayes factors,
and in fact e-values are Bayes factors (and vice versa) in the case of simple null hypotheses;
see, e.g., \cite[Sect.~1.1.3]{Grunwald/etal:arXiv1906} for a detailed discussion.
According to Jeffreys's rule of thumb as applied to e-values and p-values,
a p-value of $5\%$ corresponds to an e-value of $10^{1/2}\approx3.16$,
and a p-value of $1\%$ corresponds to an e-value of $10$.
This and similar informal correspondences suggested by other authors (such as I.~J.~Good)
are described in detail in \cite[Sect.~3]{Vovk/Wang:2023}.
Therefore, according to this interpretation of Jeffreys's rule,
the discovery matrices and discovery p-matrices
using our colour schemes are somewhat comparable;
e.g., the light red cells correspond
to similar amounts of evidence against the null hypothesis.

Comparing the left panel of Figure~\ref{fig:ie_vs_p}
to the right panel of Figure~\ref{fig:U1_vs_U2},
we can see that the results produced using e-values
are typically much better leading to tighter confidence regions.
Remarkably, even after the crude e-to-p calibration $e\mapsto 1/e$
our method produces p-values that look better than the p-values
produced by the GWGS procedure:
see the right panel of Figure~\ref{fig:ie_vs_p}.

A simple informal explanation of the superior performance of our method
is that when combining independent e-values evidence may multiply.
For instance, the product of two ``non-substantial'' independent e-values (e.g.,  around $3$) leads 
to a ``substantial'' e-value (e.g., around $9$).
In the GWGS procedure, the  standard method of combining independent p-values is that of Simes,
which will not produce anything smaller than the smallest input p-value.
In contrast, our $U_2$ function (as well as $U_n$ for $n\ge 3$)
is able to produce larger output e-values than the largest individual input.
This observation suggests that other versions of the GWGS procedure,
for instance combining a U-statistic with Fisher's method \cite{Fisher:1932},
might be more powerful than the standard method in certain applications.

\section{An empirical study with independent e-values}
\label{sec:empirical}

In this section we will use the \texttt{prostate} dataset first described
by Singh et al.\ \cite{Singh/etal:2002};
it is also analyzed in \cite[Chap.~2]{Efron:2010}
and then in \cite[Chap.~15]{Efron/Hastie:2016}.
This dataset represents a $6033\times102$ matrix
whose rows correspond to 6033 genes and columns correspond to 102 men.
The first 50 men are healthy controls and the remaining 52 are patients with prostate cancer.
Each entry $x_{k,j}$ of the matrix represents the activity of the $k$th gene in the $j$th man.

For each gene we are interested in whether its activity is different
in the patients and the healthy controls.
So we have 6033 null hypotheses of no difference to test.
Efron \cite[Chap.~2]{Efron:2010} makes the assumption of independence
of his test statistics for testing those null hypotheses
(but he also analyses shortfalls of this assumption in \cite[Sect.~2.5]{Efron:2010}),
along with several other substantial assumptions,
such as the Gaussian distribution of the genes' activities.
In this paper, however, we will avoid making any other assumptions apart from independence.

Following \cite[Sect.~8]{Vovk/Wang:2023}
(which, however, considered a different dataset),
we compute the base e-values as
\begin{equation}\label{eq:permutation-e}
  e_k
  :=
  \frac{T_k}
  {
    \frac{1}{B + 1}
    \left(
      \sum_{b=1}^B
      T_k^{(b)}
      +
      T_k
    \right)
  },
  \quad
  k\in\{1,\dots,6033\},
\end{equation}
where $T_k$ is the \emph{nonconformity score} computed
as described in the next paragraph from the $k$th row of the data matrix,
$T_k^{(b)}$ is the nonconformity score computed from the same row with randomly permuted labels
(independently for different $b$),
and $B$ is the number of permutations.

We define the nonconformity score as $T_k:=\left|t_k\right|^d$,
where $d>0$ is a parameter of the algorithm,
\begin{equation*}
  t_k
  :=
  \frac{\bar x_{k,1} - \bar x_{k,0}}{s_k}
\end{equation*}
is the two-sample t-statistic for the $k$th gene,
$\bar x_{k,1}$ is the average entry in the $k$th row for the patients,
$\bar x_{k,0}$ is the average entry in the $k$th row for the healthy controls,
\[
  s^2_k
  :=
  \sum_j
  \left(
    x_{k,j} - \bar x_{k,y_j}
  \right)^2
\]
is the sample variance of row $k$
(ignoring a constant factor, which cancels out when computing~\eqref{eq:permutation-e}),
and $y_j$ is the label (1 for the patients and 0 for the healthy controls).

\begin{figure}
  \begin{center}
    \includegraphics[width=0.49\textwidth]{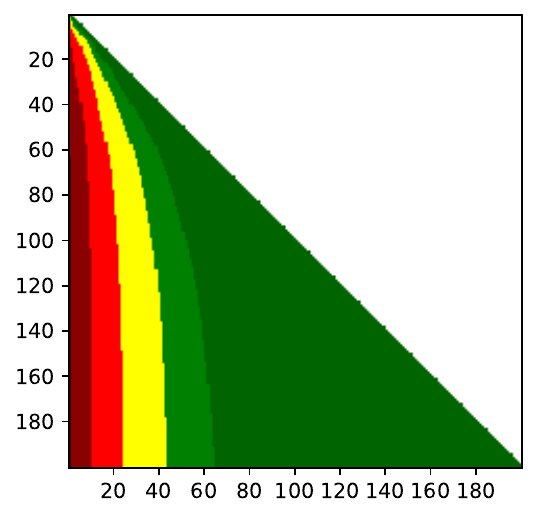}
    \includegraphics[width=0.49\textwidth]{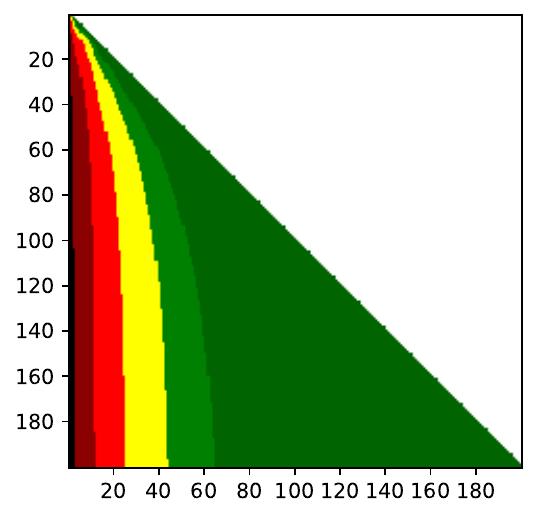}
  \end{center}
  \caption{Left panel:
    the top-left $200\times200$ corner of the $U_2$ discovery matrix
    for the \texttt{prostate} dataset for $B:=10000$,
    using Jeffreys's thresholds.
    Right panel: its simplified version (based on \eqref{eq:permutation-e-simple})
    that is only approximately valid.}
  \label{fig:U2}
\end{figure}

\begin{figure}
  \begin{center}
    \includegraphics[width=0.49\textwidth]{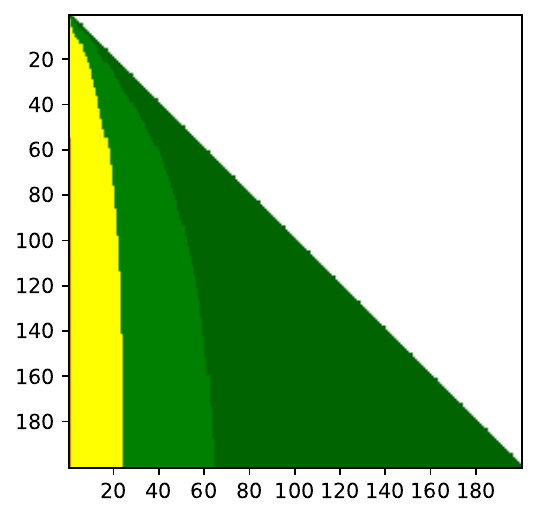}
    \includegraphics[width=0.49\textwidth]{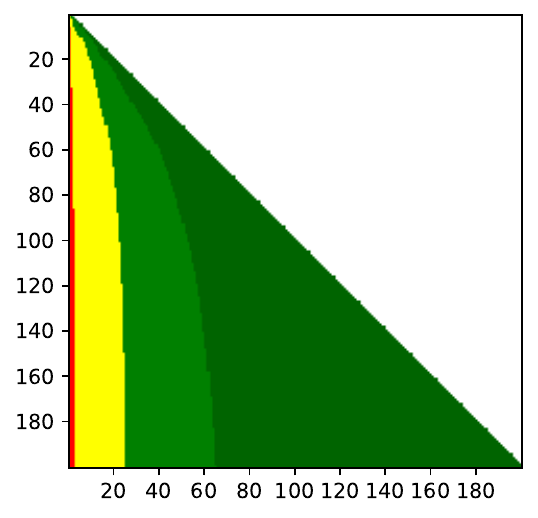}
  \end{center}
  \caption{The analogue of Figure~\ref{fig:U2} for the $U_1$ discovery matrix.}
  \label{fig:U1}
\end{figure}

The left panel of Figure~\ref{fig:U2} gives the top-left $200\times200$ corner of the discovery matrix
for $d:=10$ and $B:=10000$.
The result is much better than for the $U_1$ discovery matrix:
see the left panel of Figure~\ref{fig:U1}.
The price to pay for $U_2$ giving tighter confidence regions than $U_1$ is,
of course,
the reliance of the former on the independence assumption for the base e-variables.

In the right panels of Figures~\ref{fig:U2} and~\ref{fig:U1}
we give analogous plots but with the proper e-values \eqref{eq:permutation-e}
replaced by their commonly used simplified versions
\begin{equation}\label{eq:permutation-e-simple}
  e_k
  :=
  \frac{T_k}
  {
    \frac{1}{B}
    \sum_{b=1}^B
    T_k^{(b)}
  },
  \quad
  k\in\{1,\dots,6033\}.
\end{equation}
We will call~\eqref{eq:permutation-e-simple} \emph{simplified e-values},
but in fact they are not bona fide e-values.
The similarity between the definitions~\eqref{eq:permutation-e}
and~\eqref{eq:permutation-e-simple} suggests
that simplified e-values are approximately valid,
but sometimes their lack of validity is visible.
The closeness of the left and right panels in Figures~\ref{fig:U2} and~\ref{fig:U1}
suggests that the number of iterations $B=10000$ is sufficiently large.

\subsection*{Comparison with methods based on p-values}

\begin{figure}
  \begin{center}
    \includegraphics[width=0.49\textwidth]{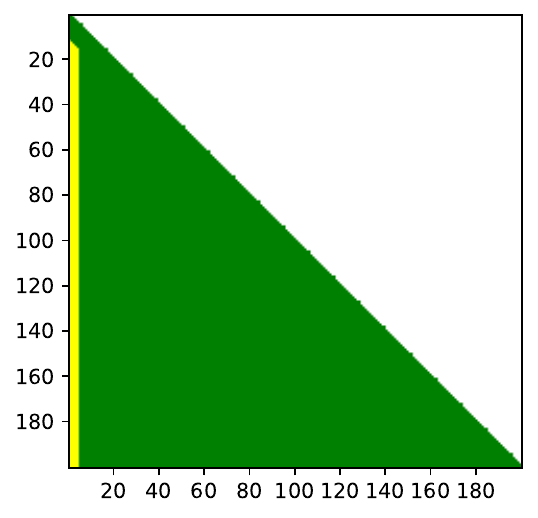}
    \includegraphics[width=0.49\textwidth]{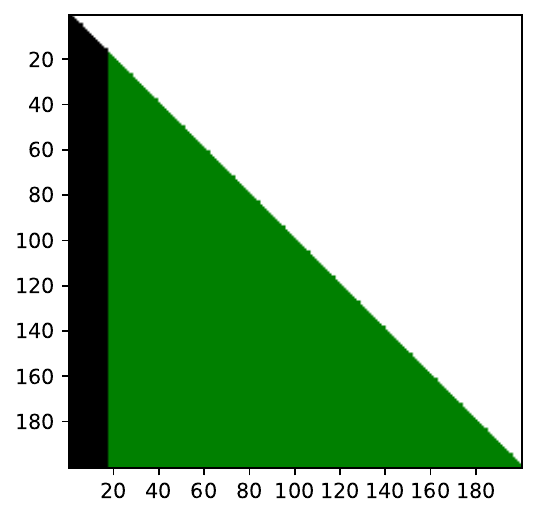}
  \end{center}
  \caption{The analogue of Figure~\ref{fig:U2} for the discovery p-matrix
    using the GWGS method with independent p-values.}
  \label{fig:GWGS}
\end{figure}

Finally, Figure~\ref{fig:GWGS} is the analogue of Figures~\ref{fig:U2} and~\ref{fig:U1}
for the GWGS method \eqref{eq:DM-p}
applied to the p-values
\begin{equation}\label{eq:permutation-p}
  p_k
  :=
  \frac
  {
    \left|\left\{
      b\in\{1,\dots,B\}
      \mid
      T_k^{(b)}
      \ge
      T_k
    \right\}\right|
    +
    1
  }{B+1}
\end{equation}
(cf.\ \cite[(2.20)]{Vovk/etal:2022book})
and their simplified versions
\begin{equation}\label{eq:permutation-p-simple}
  p_k
  :=
  \frac
  {
    \left|\left\{
      b\in\{1,\dots,B\}
      \mid
      T_k^{(b)}
      \ge
      T_k
    \right\}\right|
  }{B}
\end{equation}
for $k\in\{1,\dots,6033\}$.
In this case, there is no dependence on $d$ as the p-values only depend on the ranks of $T_k$.
The valid p-values~\eqref{eq:permutation-p}
used in the left panel of Figure~\ref{fig:GWGS} give a very poor result,
and the comparison with the right-hand panel
shows that the number of iterations $B=10000$ is far too small
when using p-values:
the putative p-values~\eqref{eq:permutation-p-simple}
in the right-hand panel are very far from being valid.

\section{Towards efficient algorithms for $U_2$ and other~$U_n$}
\label{sec:efficient}

\begin{algorithm}[bt]
  \caption{One row of the  discovery matrix $\DM=\DM^{U_2}$ in time $O(K^2)$}
  \label{alg:DM-eff}
  \begin{algorithmic}[1]
    \Require
      Decreasing sequence of e-values $e_1\ge\dots\ge e_K$.
    \Require
      Row number $r\in\{1,\dots,K\}$ of the discovery matrix.
    \State $s_{K+1} := 0$\label{ln:s-start}
    \For{$k=K,\dots,1$}
      \State $s_k := s_{k+1}+e_k$\label{ln:s-end}
    \EndFor
    \For{$j=r-1,\dots,0$}
      \State $V := V_2(\{j+1,\dots,r\})$\label{ln:V}
      \State $e := \frac{2V}{(r-j)(r-j-1)}$ (unless $j=r-1$)\label{ln:e}
      \For{$k=K,\dots,r+1$}\label{ln:it-start}
        \State $V' := V_2(\{j+1,\dots,r\}\cup\{k,\dots,K\})$\label{ln:V-prime}
        \State $e' := \frac{2V'}{(r-j+K-k+1)(r-j+K-k)}$\label{ln:e-prime}
        \If{$e' < e$}
          \State $e := e'$
        \EndIf
      \EndFor
      \State $\DM_{r,j} := e$
    \EndFor
  \end{algorithmic}
\end{algorithm}

In this appendix we will see that each row of the discovery matrix based on $U_2$
can be computed in time $O(K^2)$,
and so the computation of the full discovery matrix takes time $O(K^3)$.
This is not as good as the $O(K^2)$ algorithm for $U_1$
given as Algorithm~4 in \cite{Vovk/Wang:2023},
and the existence of such an algorithm for $U_2$ remains an open problem.

Algorithm~\ref{alg:DM-eff} starts (in lines \ref{ln:s-start}--\ref{ln:s-end})
from defining an array
\[
  s_k
  :=
  e_k+\dots+s_K,
  \quad
  k\in\{1,\dots,K+1\},
\]
of partial sums, which is not used in the algorithm explicitly,
but we will explain how it enables
an efficient update of the variables $V$ and $V'$
in lines~\ref{ln:V} and~\ref{ln:V-prime}.
In the algorithm we use the notation
\[
  V_2(I)
  :=
  \sum_{i,i'\in I: i<i'}
  e_i e_{i'},
  \quad
  \emptyset\subset I\subseteq\{1,\dots,K\};
\]
this expression is the key component of the U-statistic computed from $e_i$, $i\in I$.
When $\left|I\right|=1$,
we will treat the value of $V_2(I)$ as undefined.

The first value of $V$ in line~\ref{ln:V}, $V_2(\{r\})$,
is undefined, and we then set $e:=e_{r}$
instead of the formula given in line~\ref{ln:e}.
The next value of $V$,
\[
  V
  :=
  V_2(\{r-1,r\}
  =
  e_{r-1} e_{r},
\]
is computed from scratch in time $O(1)$,
and the following values are computed in time $O(1)$ using the previous value:
$V = V_2(\{j+1,\dots,r\})$ is computed from the previous value as
\[
  V
  =
  V
  +
  e_{j+1}
  \left(
    e_{j+2}+\dots+e_{r}
  \right)
  =
  V
  +
  e_{j+1}
  \left(
    s_{j+2} - s_{r+1}
  \right).
\]
The very first value of $V'$ computed in line~\ref{ln:V-prime}
(for $j=r-1$ and $k=K$)
can be found in time $O(1)$ from scratch,
\[
  V'
  =
  V_2(\{r,K\})
  =
  e_r e_K.
\]
After that the first value of $V'$ in each execution of the loop
starting in line~\ref{ln:it-start}
(i.e., $V'$ for $j<r-1$ and $k=K$)
can be found in time $O(1)$ from the current value of $V$ using
\[
  V'
  :=
  V
  +
  e_K
  \left(
    s_{j+1} - s_{r+1}
  \right).
\]
For the following iterations of this loop
the value of $V'$ can be updated in time $O(1)$ using
\begin{align*}
  V'
  &:=
  V'
  +
  e_{k}
  \left(
    e_{j+1}+\dots+e_{r}+e_{k+1}+\dots+e_K
  \right)\\
  &=
  V'
  +
  e_{k}
  \left(
    s_{j+1} - s_{r+1} + s_{k+1}
  \right).
\end{align*}

\subsection*{Modifications for $U_n$, $n>2$}

It is easy (but tiresome) to modify Algorithm~\ref{alg:DM-eff}
so that it computes a row of $\DM^{U_n}$ for a fixed $n>2$ in time $O(K^2)$.
Let us consider, for simplicity, the case $n=3$.

Both entries of $V_2$ in Algorithm~\ref{alg:DM-eff} should be changed to $V_3$,
where
\[
  V_3(I)
  :=
  \sum_{i,i',i''\in I: i<i'<i''}
  e_i e_{i'} e_{i''},
\]
line \ref{ln:e} should be changed to
\[
  e
  :=
  \frac{6V}{(r-j)(r-j-1)(r-j-2)}
\]
(unless $j=r-1$ or $j=r-2$),
and line \ref{ln:e-prime} should be changed to
\[
  e'
  :=
  \frac{6V'}{(r-j+K-k+1)(r-j+K-k)(r-j+K-k-1)}
\]
(unless $j=r-1$ and $k=K$).
Lines \ref{ln:s-start}--\ref{ln:s-end} for computing the array
\[
  s_k
  :=
  V_1(\{k,\dots,K\}),
\]
where
\[
  V_1(I)
  :=
  \sum_{i\in I}
  e_i,
\]
should be complemented by computing the array
\[
  t_k
  :=
  V_2(\{k,\dots,K\}),
  \quad
  k=K,\dots,1.
\]
The array $t$ can be computed in time $O(K)$ starting from $t_K:=0$
and setting, for $k=K-1,\dots,1$,
\[
  t_k
  :=
  t_{k+1}
  +
  e_k s_{k+1}.
\]
In line \ref{ln:V}, we can compute $V$ in time $O(1)$ given its previous value using the identity
(true unless $j$ is very close to $r$)
\begin{equation*}
  V_3(\{j+1,\dots,r\})
  =
  V_3(\{j+2,\dots,r\})
  +
  e_{j+1}
  V_2(\{j+2,\dots,r\})
\end{equation*}
and the identity
\begin{equation}\label{eq:ab}
  V_2(\{j+2,\dots,r\})
  =
  t_{j+2} - t_{r+1} - s_{r+1} (s_{j+2} - s_{r+1}).
\end{equation}
Finally, in line \ref{ln:V-prime} we can compute $V'$ in time $O(1)$
given its previous value using the identity
(true unless $j$ is very close to $r$ or $k$ is very close to $K$)
\begin{align*}
  &V_3(\{j+1,\dots,r\}\cup\{k,\dots,K\})\\
  &=
  V_3(\{j+1,\dots,r\}\cup\{k+1,\dots,K\})\\
  &\quad+
  e_{k}
  V_2(\{j+1,\dots,r\}\cup\{k+1,\dots,K\})\\
  &=
  V_3(\{j+1,\dots,r\}\cup\{k+1,\dots,K\})\\
  &\quad+
  e_{k}
  \left(
    V_2(\{j+1,\dots,r\})
    +
    t_{k+1}
    +
    (s_{j+1}-s_{r+1})s_{k+1}
  \right)
\end{align*}
and the identity \eqref{eq:ab}
(with $j+1$ in place of $j+2$).
The simple but numerous special cases (signalled by ``unless'') should be considered separately.

\section{When can we expect the $U_2$ merging function to be effective?}
\label{sec:attempt}

This section is an attempt of a theoretical explanation
of the efficiency of the $U_2$ merging function
in our simulation and empirical studies.
We start from an alternative representation of $U_2$,
which will shed some light on its expected performance.

Let $\mathbf{e}:=(e_1,\dots,e_K)\in[0,\infty)^K$,
$M_1=U_1(\mathbf{e})$ be the arithmetic mean of $e_1,\dots,e_K$,
\[
  M_2
  :=
  \sqrt{\frac{e_1^2+\dots+e_K^2}{K}}
\]
be the quadratic mean of $e_1,\dots,e_K$,
and
\[
  \var(\mathbf{e})
  :=
  \frac1K
  \sum_{k=1}^K
  (e_k - M_1)^2
  =
  M_2^2
  -
  M_1^2
\]
be the sample variance of $e_1,\dots,e_K$.

\begin{lemma}
  For any $\mathbf{e}$,
  \begin{equation}\label{eq:identity}
    U_2(\mathbf{e})
    = 
    M_1^2 - \frac{1}{K-1} \var(\mathbf{e}).
  \end{equation}
\end{lemma}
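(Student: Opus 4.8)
The plan is to reduce both sides of \eqref{eq:identity} to the two power sums $S_1 := \sum_{k=1}^K e_k$ and $S_2 := \sum_{k=1}^K e_k^2$ and then simplify. The statement is a purely algebraic identity, so there is no genuine obstacle; the only thing requiring care is the bookkeeping of the normalizing constants $\binom{K}{2}$ versus $K(K-1)$.

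First I would invoke the elementary symmetric-function identity
\[
  S_1^2
  =
  \Bigl(\sum_{k=1}^K e_k\Bigr)^2
  =
  \sum_{k=1}^K e_k^2 + 2\sum_{1\le i<j\le K} e_i e_j
  =
  S_2 + 2\sum_{i<j} e_i e_j,
\]
which is exactly the $k=K$ case of the display already used in Section~3 to speed up the computation. Solving for the off-diagonal sum and recalling that $\binom{K}{2}=K(K-1)/2$, definition \eqref{eq:U} gives
\[
  U_2(\mathbf{e})
  =
  \frac{1}{\binom{K}{2}}\sum_{i<j} e_i e_j
  =
  \frac{S_1^2 - S_2}{K(K-1)}.
\]

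Next I would substitute $S_1 = K M_1$ and $S_2 = K M_2^2$ (the latter being precisely the definition of the quadratic mean), which turns the previous display into $U_2(\mathbf{e}) = (K M_1^2 - M_2^2)/(K-1)$. Finally I would expand the right-hand side of \eqref{eq:identity} over the common denominator $K-1$, using $\var(\mathbf{e}) = M_2^2 - M_1^2$:
\[
  M_1^2 - \frac{1}{K-1}\var(\mathbf{e})
  =
  \frac{(K-1)M_1^2 - (M_2^2 - M_1^2)}{K-1}
  =
  \frac{K M_1^2 - M_2^2}{K-1},
\]
which coincides with the expression just obtained for $U_2(\mathbf{e})$, completing the proof. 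The only place where one can slip is confusing the two normalizations $\binom{K}{2}$ and $K(K-1)$; once the off-diagonal sum is rewritten through $S_1$ and $S_2$, the rest of the computation is forced.
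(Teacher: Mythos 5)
Your proposal is correct and follows essentially the same route as the paper: expand $\bigl(\sum_k e_k\bigr)^2$ to express $\sum_{i<j}e_ie_j$ through the power sums, normalize by $\binom{K}{2}$, and compare with $M_1^2-\frac{1}{K-1}\var(\mathbf{e})$ via $\var(\mathbf{e})=M_2^2-M_1^2$. The only cosmetic difference is that you verify both sides equal $(KM_1^2-M_2^2)/(K-1)$ rather than transforming the left side directly into the right.
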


\begin{proof}
  By definition,
  \begin{align*}
    U_2(\mathbf{e})
    &=
    \frac{1}{K(K-1)/2}
    \sum_{i<j}
    e_i e_j
    =
    \frac{1}{K(K-1)}
    \left(
      \left(
        \sum_i e_i
      \right)^2
      -
      \sum_i e_i^2
    \right)\\
    &=
    \frac{K}{K-1} M_1^2
    -
    \frac{1}{K-1} M_2^2
    = 
    M_1^2 - \frac{1}{K-1} \var(\mathbf{e}).
    \qedhere
  \end{align*}
\end{proof}

\begin{corollary}\label{cor:rvar}
  For any $\mathbf{e}$,
  \[
    \var(\mathbf{e})
    \le
    (K-1)
    M_1^2.
  \]
  For some $\mathbf{e}\ne0$ the inequality holds as equality.
\end{corollary}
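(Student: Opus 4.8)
The plan is to read the inequality straight off identity \eqref{eq:identity} from the Lemma, using nothing beyond the nonnegativity of $U_2$. First I would observe that $U_2(\mathbf{e})$ is, up to the positive normalizing factor $1/\binom{K}{2}$, a sum of products $e_i e_j$ of nonnegative reals, so $U_2(\mathbf{e})\ge0$ for every $\mathbf{e}\in[0,\infty)^K$. Plugging this into \eqref{eq:identity} gives
\[
  0
  \le
  U_2(\mathbf{e})
  =
  M_1^2 - \frac{1}{K-1}\var(\mathbf{e}),
\]
and rearranging yields $\var(\mathbf{e})\le(K-1)M_1^2$, which is the asserted bound.

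For the tightness claim I would exhibit one explicit vector. Equality holds exactly when $U_2(\mathbf{e})=0$, i.e.\ when $\sum_{i<j}e_ie_j=0$; since all $e_i\ge0$, this forces at most one coordinate to be nonzero. I would therefore take $\mathbf{e}=(c,0,\dots,0)$ for any $c>0$, which is nonzero, and verify by a one-line computation that both sides equal $(K-1)c^2/K^2$: indeed $M_1=c/K$, and
\[
  \var(\mathbf{e})
  =
  \frac1K\left(\left(c-\frac cK\right)^2+(K-1)\left(\frac cK\right)^2\right)
  =
  \frac{(K-1)c^2}{K^2}
  =
  (K-1)M_1^2.
\]

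There is no real obstacle here: the entire content is the already-proved identity together with the trivial nonnegativity of $U_2$. The only point requiring a sliver of care is the equality case, where one should note that nonnegativity of the summands makes $U_2(\mathbf{e})=0$ equivalent to having at most one nonzero $e_k$, so the extremal configurations are precisely the \emph{spike} vectors supported on a single coordinate.
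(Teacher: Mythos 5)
Your proof is correct and follows the same route as the paper: the inequality is read off from the nonnegativity of $U_2$ via the identity \eqref{eq:identity}, and equality is witnessed by a spike vector supported on one coordinate (the paper uses $\mathbf{e}=(K,0,\dots,0)$, you use $(c,0,\dots,0)$ — the same family). Your extra observations, the explicit verification and the characterization of the equality cases as exactly the single-spike vectors, are accurate but not needed beyond what the paper records.
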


\begin{proof}
  The first statement follows from $U_2(\mathbf{e})\ge0$,
  and an example for the second one is $\mathbf{e}:=(K,0,\dots,0)$.
\end{proof}

According to Corollary~\ref{cor:rvar},
\[
  \rvar(\mathbf{e})
  :=
  \frac
  {\var(\mathbf{e})}
  {(K-1)M_1^2},
\]
which we will call the \emph{relative (sample) variance} of $\mathbf{e}$,
is a dimensionless quantity in the interval $[0,1]$.
When $\mathbf{e}=0$, we set $\rvar(\mathbf{e}):=0$.
The relative variance is zero if and only if all $e_i$ coincide,
and it is 1 if and only if all $e_i$ but one are zero.

Using the notion of relative variance,
we can rewrite \eqref{eq:identity} as
\begin{equation}\label{eq:identity-2}
  U_2(\mathbf{e})
  = 
  M_1^2
  (1 - \rvar(\mathbf{e})).
\end{equation}
We can see that the method of this paper based on $U_2$ has a potential
for improving on the $U_1$ method of \cite{Vovk/Wang:2023},
but the best it can achieve is squaring the entries of the discovery matrix.
An entry of a discovery matrix based on $U_1$
is squared when we replace $U_1$ by $U_2$
if the multiset of e-values on which the min in \eqref{eq:algorithm} is attained
(for $F:=U_1$)
consists of a single value.
Otherwise the discovery matrix based on $U_2$ suffers as the e-values become more diverse.

The identity~\eqref{eq:identity-2} shows that $U_2>M_1$ (i.e., $U_2$ improves on $U_1=M_1$)
if and only if
\begin{equation*}
  \rvar(\mathbf{e})
  <
  1 - \frac{1}{M_1}.
\end{equation*}
For example, if the result of applying $U_1$ is borderline strong evidence against the null hypothesis,
$U_1=10$,
it is improved by $U_2$ if and only if $\rvar(\mathbf{e})<0.9$.
For comparison:
\begin{itemize}
\item
  in our simulation studies,
  the relative variance of the whole set of 200 likelihood ratios \eqref{eq:E}
  (those used in Figure~\ref{fig:U1_vs_U2})
  is approximately $0.24$,
  and the relative variance of the 20 largest of them is approximately $0.23$;
\item
  in our empirical studies,
  the relative variance of the whole set of 6033 Monte Carlo e-values~\eqref{eq:permutation-e}
  (those used in Figures~\ref{fig:U2} and~\ref{fig:U1})
  is approximately $0.035$;
  the relative variance of the 200 largest among those 6033 e-values
  is approximately $0.031$,
  while the relative variance of the 20 largest
  is approximately $0.028$.
\end{itemize}

\section{Conclusion}
\label{sec:conclusion}

This paper has given examples of se-merging functions,
namely $U_n$,
which can be successfully applied,
in simulation and empirical studies,
to multiple hypothesis testing with independent base e-values.
An interesting question is whether admissible ie-merging functions
different from $U_n$ and their convex combinations
can also be useful for this purpose.
Such functions definitely exist;
e.g., in \cite[Remark~4.3]{Vovk/Wang:2021}
we show that
\begin{equation}\label{eq:f}
  f(e_1,e_2)
  :=
  \frac12
  \left(
    \frac{e_1}{1 + e_1}
    +
    \frac{e_2}{1 + e_2}
  \right)
  \left(
    1 + e_1 e_2
  \right)
\end{equation}
is an admissible ie-merging function.
To check that $f(e_1,e_2)$ does not have the form $a e_1 + b e_2 + c e_1 e_2$,
it suffices to set $e_2:=0$.
We can even show that \eqref{eq:f} is not an se-merging function:
see \cite[Example~2]{Vovk/Wang:2024}.

In our computational experiments in Sections~\ref{sec:simulation}
and~\ref{sec:empirical} we only study the case of independent base e-values,
while in Sect.~\ref{sec:discovery} we study the more general case
of sequential e-values.
The reason is that, while our main procedure works perfectly well
in the sequential case from the mathematical point of view,
it appears to be natural from the practical point of view
only in the independent case.
In the sequential case,
as described in Sect.~\ref{sec:discovery},
it is natural to allow testing the same hypothesis more than once.
Another interesting direction of further research
is to design such more general sequential procedures.

In Sect.~\ref{sec:efficient} we described an $O(K^2)$ algorithm
implementing our procedure for computing one row of discovery matrices based on $U_n$.
It is an open problem to find $O(K)$ procedures for computing one row of discovery matrices
based on $U_n$ with $n>1$
(for $n=1$ it was done in \cite[Proposition 4.1]{Vovk/Wang:2021}),
or to prove that such procedures do not exist.

\subsection*{Acknowledgments}
\addcontentsline{toc}{section}{Acknowledgments}

We are grateful to Yuri Gurevich for useful discussions.
We also thank the editors and two anonymous referees for helpful comments
about the journal version of this paper.
In our simulation and empirical studies we used Python and \textsf{R},
including the package \texttt{hommel} \cite{Goeman/etal:2019R}.

V.~Vovk's research has been partially supported by Astra Zeneca, Stena Line, and Mitie.
R.~Wang is supported by  the Natural Sciences and Engineering Research Council of Canada 
(RGPIN-2018-03823, RGPAS-2018-522590).

\end{document}